\documentclass[11pt,a4paper]{article}
\usepackage[utf8]{inputenc}
\usepackage{amsmath,amssymb,amsthm}
\usepackage{geometry}
\newcommand{\MIS}{\mathit{MIS}}
\newcommand{\classP}{\mathsf{P}}
\newcommand{\classNP}{\mathsf{NP}}

\newtheorem{theorem}{Theorem}
\newtheorem{lemma}{Lemma}

\newtheorem{definition}{Definition}
\newtheorem{assumption}{Assumption}

\title{\textbf{A Predictor-Impossibility Theorem and Applications}}
\author{\textbf{Tom Altman} \\
	\small University of Colorado Denver and Stilman Advanced Strategies, \\
	\small Denver, Colorado, USA\\
	\small \texttt{tom.altman@ucdenver.edu}}
\date{\today}
\begin{document}
	\maketitle	
	%=============================================================================
	% ABSTRACT
	%=============================================================================
	\begin{abstract}
		We introduce a hierarchy consisting of stage machines, stage domains, and stage languages generated by semantic operators. The central result is a Predictor Impossibility Theorem (PITT), which shows that no effective predictor family can uniformly determine all stage languages of our hierarchy. The proof makes use of a pseudo-complement construction to obtain a language that yields a contradiction with every language in $\classP$.
		
		We then define an aggregate language $\MIS$ and establish a formal Slice Theorem connecting aggregate inputs to individual stage languages. This provides a rigorous Bridge Theorem from polynomial-time decidability of $\MIS$ to the existence of an effective predictor family. By utilizing succinct representations, the aggregate language is shown to be undecidable in deterministic polynomial time.
		
		Under the aggregate growth condition defining valid aggregate objects, $\MIS$ is shown to belong to $\classNP$. Combining these two results yields our main theorem: $\MIS \in \classNP \setminus \classP$. The paper is organized so that PITT stands independently as a theoretic result, while the complexity-theoretic consequences are derived from the aggregate-language framework.
	\end{abstract}
	
	%=============================================================================
	% SECTION 1: INTRODUCTION
	%=============================================================================
	\section{Introduction}
	This paper studies a hierarchy generated from stage machines through a pseudo-complement operator. Each stage consists of a machine, a domain, and an associated stage language. The defining feature of the hierarchy is that stage languages are generated from the machines themselves. This self-generated structure creates a natural setting for investigating prediction and diagonalization.
	
	The first objective of the paper is theoretic. We ask whether there exists an effective family of predictors capable of uniformly predicting all stage languages of the hierarchy. The main result is the Predictor Impossibility Theorem (PITT), which shows that no such predictor family can exist. The proof combines a computable modification of predictors where the resulting stage language is forced into a contradiction through the interaction of prediction, pseudo-complementation, and diagonalization.
	
	The second objective is complexity-theoretic. Building on the hierarchy, we introduce an aggregate language $\MIS$ generated via aggregate expansions, establishing a formal Slice Theorem and Bridge Theorem relating aggregate inputs to individual stage languages. This connection provides a bridge from polynomial-time decidability of the aggregate language to the existence of an effective predictor family. We show that the resulting $\MIS$ is not decidable in deterministic polynomial time.
	
	The paper is organized around the prediction impossibility result. After introducing the hierarchy, we establish PITT as an independent result. The language $\MIS$ is subsequently shown to belong to $\classNP$ under aggregate growth conditions. Together, these two results yield our main theorem: $\MIS \in \classNP \setminus \classP$.
	
	%=============================================================================
	% SECTION 2: PRELIMINARIES AND NOTATION
	%=============================================================================
	\section{Preliminaries and Notation}
	Let $n_1 = 1$ and $n_{i+1} = 2^{n_i}$.
	
	We assume the sequence $P_1, P_2, P_3, \ldots$ forms a standard enumeration of partial recursive functions and stage machine descriptions, following classical recursion theory of Kleene~\cite{Kleene}.
	
	\subsection{Machine Descriptions}
	The stage machines $P_i$ are indexed according to this enumeration.
	Let $T(n) = n^i$ be a recursive time-binding function.
	For a stage machine $P_i$, its ordinary machine language is denoted by $L(P_i) = \{u \in D_i : P_i(u)=1 \text{ within } T(|u|) \text{ steps}\}$. The notation $L(P_i)$ always refers to the language accepted by the machine itself.
	
	\subsection{Stages}
	Each natural number $i \in \mathbb{N}$ determines a stage. Associated with each stage $i$ are $P_i$, $D_i$, and $\mathcal{C}_i$. $D_i$ is the finite size domain, where $|D_i|= 2^{n_i}$ and $n_i$ is the stage length. The stage language is $\mathcal{C}_i$, with $\mathcal{C} = \bigcup_{i=1}^{\infty} \mathcal{C}_i$.
	
	\subsection{Machine Languages and Stage Languages}
	A crucial distinction is maintained throughout the paper. The machine language $L(P_i)$ is the ordinary language accepted by the stage machine. The stage language $\mathcal{C}_i$ is generated through the pseudo-complement operator $\Phi$, where $\mathcal{C}_i = \Phi(P_i)$. This relation is called the \textit{Stage Identity}.
	
	Any nonempty machine language $L(P_i)$ and the stage language $\mathcal{C}_i = \Phi(P_i)$ will always be different from each other. This distinction is essential for our diagonalization argument to work.
	
	\subsection{Aggregate Objects}
	A valid aggregate object is an input tuple $U_i = (u_1, \dots, u_m)$ whose components satisfy $u_j \in D_i$ for every $j$. The aggregate input size in this case is $q_i = |U_i| = mn_i$.
	
	%=============================================================================
	% SECTION 3: THE HIERARCHY
	%=============================================================================
	\section{The Hierarchy}
	This section introduces the hierarchy that serves as the foundation for all constructions.
	
	\subsection{Stage Machines, Operators, and Languages}
	\begin{definition}[Stage Machine Convention]
		For every stage $i$, the associated stage machine $P_i$ computes a partial recursive function.
		On every input in the finite domain $D_i = \{0, 1\}^{n_i}$, the clocked $P_i$ halts with an output of 0 or 1 within $T(n_i)$ steps, or $P_i$ may try to continue its computation or even diverge. The timing of $P_i$ is performed only on the finite domain $D_i$.
	\end{definition}
	
	For each stage $i$, let $z_i = \#^{n_i}$ denote the inert sentinel associated with that stage.
	
	\begin{definition}[Semantic Pseudo-Complement Operator $\Phi$]
		For every stage machine $P_i$, the operator $\Phi(P_i)$ is evaluated strictly on the finite domain $D_i$ and within the stage time threshold of $T(n_i)$. We define:
		\[
		\Phi(P_i) = \left\{ v \in D_i \setminus \{z_i\} : P_i(v) \text{ does not halt with 1 within } T(|v|) \text{ steps} \right\}.
		\]
	\end{definition}
	
	\begin{definition}[Stage Language]
		The stage language associated with stage $i$ is $\mathcal{C}_i = \Phi(P_i)$.
	\end{definition}
	
	\subsection{Pseudo-Complement Lemma}
	\begin{lemma}[Pseudo-Complement Lemma (PCL)]
		For every machine $P_i$, $\mathcal{C}_i = (D_i \setminus \{z_i\}) \setminus L(P_i)$.
	\end{lemma}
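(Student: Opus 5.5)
The plan is a direct double-inclusion argument that unfolds the definitions of $\Phi$ and $L(P_i)$ on the common finite domain $D_i$. The only subtlety is making the two definitions mention literally the same predicate. The definition of $\Phi$ speaks of $P_i(v)$ halting with $1$ within $T(|v|)$ steps. The definition of $L(P_i)$ speaks of $P_i(u)=1$ within $T(|u|)$ steps.

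First I fix the reading of the acceptance predicate. Under the Stage Machine Convention, $P_i$ is clocked on $D_i$. For every $v \in D_i$ we have $|v| = n_i$, so both definitions use the same bound $T(n_i)$. I would define the predicate
\[
A_i(v) :\iff P_i \text{ halts on } v \text{ with output } 1 \text{ within } T(n_i) \text{ steps}.
\]
The phrase ``$P_i(u)=1$ within $T(|u|)$ steps'' in the definition of $L(P_i)$ is exactly $A_i(u)$, since a value is only produced upon halting. Therefore $L(P_i) = \{u \in D_i : A_i(u)\}$. Likewise, $\Phi(P_i) = \{v \in D_i\setminus\{z_i\} : \neg A_i(v)\}$.

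With this in hand, the set identity is elementary. For $v \in \mathcal{C}_i = \Phi(P_i)$, we have $v \in D_i\setminus\{z_i\}$ and $\neg A_i(v)$, hence $v \notin L(P_i)$. So $v \in (D_i\setminus\{z_i\})\setminus L(P_i)$. Conversely, if $v \in D_i\setminus\{z_i\}$ and $v \notin L(P_i)$, then $v\in D_i$ forces $\neg A_i(v)$ by the description of $L(P_i)$, so $v \in \Phi(P_i)$. The case where $P_i$ diverges or runs past the clock on $v$ is covered uniformly: it falls under $\neg A_i(v)$, so no separate case split is needed. Halting with output $0$ is handled the same way.

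I expect the main obstacle to be this first step: justifying that the semantic phrase ``does not halt with 1 within $T(|v|)$ steps'' is the exact Boolean negation of membership in $L(P_i)$. That requires the time-bound function $T$ to be the same in both places. The notation $T(n)=n^i$ depends on the stage index, and I would note that it is evaluated with the same $i$ in both definitions. I would also remark that the sentinel $z_i = \#^{n_i}$ is not a binary string, so $z_i \notin D_i = \{0,1\}^{n_i}$. The removal of $\{z_i\}$ is therefore harmless but consistent on both sides, and the identity holds whether or not one regards $z_i$ as lying in $D_i$.
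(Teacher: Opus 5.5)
Your proof is correct and is essentially the paper's argument: both unfold the definitions of $\Phi(P_i)$ and $L(P_i)$ on $D_i$ and check the two inclusions. Your predicate $A_i$ just makes explicit that ``does not halt with 1 within $T(n_i)$ steps'' is the exact negation of membership in $L(P_i)$, and your side remarks (the same $T$ with the same $i$ on both sides, and $z_i=\#^{n_i}\notin\{0,1\}^{n_i}$, so removing it is harmless) are accurate and slightly tighten the paper's terse proof.
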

	\begin{proof}
		By Definitions 2 and 3, $\mathcal{C}_i = \Phi(P_i)$, which isolates the domain to $D_i$. Within the finite domain $D_i$, if $v \in (D_i \setminus \{z_i\}) \setminus L(P_i)$, then $P_i(v) = 0$ (or $P_i$ fails to output 1 within the step limit), meaning $v \in \Phi(P_i)$. Conversely, if $v \in L(P_i)$, then $P_i(v) = 1$ within the allowed steps, so $v \notin \Phi(P_i)$. Thus, within $D_i \setminus \{z_i\}$, the two sets are strictly complementary.
	\end{proof}
	
	%=============================================================================
	% SECTION 4: STAGE DIAGONALIZATION THEOREM
	%=============================================================================
	\section{Stage Diagonalization}
	We now establish the central diagonalization result of the hierarchy. 
	
	\begin{theorem}[Predictor Impossibility Theorem (PITT)~\cite{Altman}]
		For any machine $P_k$, $L(P_k) \neq \mathcal{C}_k$.
	\end{theorem}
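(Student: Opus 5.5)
The plan is to derive the statement directly from the Pseudo-Complement Lemma (PCL), using two facts: a set that equals its own relative complement must be empty, and the domain $D_k$ is never empty.

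\textbf{Step 1 (apply PCL).} Fix an arbitrary stage machine $P_k$. By PCL, $\mathcal{C}_k = (D_k \setminus \{z_k\}) \setminus L(P_k)$. By its definition, $L(P_k) \subseteq D_k$.

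\textbf{Step 2 (disjointness and covering).} From Step 1, $\mathcal{C}_k \cap L(P_k) = \emptyset$ and $\mathcal{C}_k \cup L(P_k) \supseteq D_k \setminus \{z_k\}$. Suppose, for contradiction, that $L(P_k) = \mathcal{C}_k$. Disjointness then gives $L(P_k) = L(P_k) \cap \mathcal{C}_k = \emptyset$. The covering inclusion then gives $D_k \setminus \{z_k\} = \emptyset$.

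\textbf{Step 3 (the domain minus the sentinel is nonempty).} We have $D_k = \{0,1\}^{n_k}$ with $|D_k| = 2^{n_k} \ge 2$, since $n_k \ge 1$. The sentinel $z_k = \#^{n_k}$ is written over a symbol outside $\{0,1\}$, so $z_k \notin D_k$, and therefore $D_k \setminus \{z_k\} = D_k \neq \emptyset$. Even if one read $z_k$ as an element of $D_k$, removing one point would still leave at least one element. Either way this contradicts Step 2, so $L(P_k) \neq \mathcal{C}_k$.

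The only delicate point is making sure the argument does not rely on the informal remark in the preliminaries that only \emph{nonempty} machine languages differ from their stage languages. The case $L(P_k) = \emptyset$ is exactly where that remark is silent. There $\mathcal{C}_k = D_k \setminus \{z_k\}$, and Step 3 shows this set is nonempty, so the theorem holds for every $P_k$, including those whose clocked computation accepts nothing. I therefore expect Step 3, namely pinning down the status of the sentinel $z_k$ relative to $D_k$, to be the main thing to state carefully. The rest is immediate set algebra from PCL.
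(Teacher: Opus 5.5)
Your argument is correct and is essentially the paper's proof. Both apply PCL, substitute the assumed equality $L(P_k)=\mathcal{C}_k$, and reach a contradiction because a set cannot equal its own relative complement inside a nonempty $D_k\setminus\{z_k\}$; the paper phrases this elementwise as $x\in L(P_k)\iff x\notin L(P_k)$, and you phrase it globally via disjointness and covering, with your Step 3 making explicit the nonemptiness of $D_k\setminus\{z_k\}$ that the paper uses silently when it selects such an $x$.
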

	\begin{proof}
		Assume for contradiction that there exists some stage machine $P_k$ such that $L(P_k) = \mathcal{C}_k$.
		
		By the PCL (Lemma 1), we have the identity:
		\[
		\mathcal{C}_k = (D_k \setminus \{z_k\}) \setminus L(P_k)
		\]
		Substituting our assumption into this identity yields: 
		\[
		L(P_k) = (D_k \setminus \{z_k\}) \setminus L(P_k)
		\]
		
		Because the timing constraints and execution by $P_k$ are performed only on the finite domain $D_k$, we can select any input string $x \in D_k \setminus \{z_k\}$. It follows that:
		\[
		x \in L(P_k) \iff x \in (D_k \setminus \{z_k\}) \setminus L(P_k)
		\]
		
		By the definition of relative complement on the finite domain,
		\[
		x \in (D_k \setminus \{z_k\}) \setminus L(P_k) \iff x \notin L(P_k)
		\]
		resulting in $x \in L(P_k) \iff x \notin L(P_k)$.
		
		A contradiction. Thus, our assumption is false, and $L(P_k) \neq \mathcal{C}_k$ for all $k$.
	\end{proof}
	
	%=============================================================================
	% SECTION 5: THE AGGREGATE LANGUAGE MIS, SLICE, AND BRIDGE THEOREMS
	%=============================================================================
	\section{The Aggregate Language \textit{MIS}, Slice, and Bridge Theorems}
	
	\subsection{Aggregate Acceptance}
	For each stage $i$, let $z_i = \#^{n_i}$ denote the inert sentinel associated with that stage. By the clocked definition of the pseudo-complement operator, $\mathcal{C}_i(z_i) = 0$ holds directly by Definition 2, which explicitly removes $z_i$ from the image of $\Phi(P_i)$.
	
	\begin{definition}[Aggregate Language]
		Let $U_i$ be a valid aggregate object belonging to stage $i$. The aggregate language $\MIS$ is defined by $\MIS(U_i) = 1$ if and only if there exists a valid component $u_j$ implicitly or explicitly encoded by $U_i$ such that $\mathcal{C}_i(u_j) = 1$.
	\end{definition}
	
	\subsection{Slice Theorem}
	\begin{theorem}[Slice Theorem]
		For every stage $i$ and every string $u \in D_i$, 
		\[
		\MIS(u, z_i, \dots, z_i) = \mathcal{C}_i(u).
		\]
	\end{theorem}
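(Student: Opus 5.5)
The plan is to unfold the definition of $\MIS$ on the padded tuple $U = (u, z_i, \dots, z_i)$ and then reduce the existential quantifier in the definition of the aggregate language to a statement about the single non-sentinel component $u$. By the definition of $\MIS$, we have $\MIS(U) = 1$ if and only if some valid component $u_j$ encoded by $U$ satisfies $\mathcal{C}_i(u_j) = 1$. The components of $U$ are $u$ itself and $m-1$ copies of the sentinel $z_i$. So the disjunction over components splits into two parts: the term $\mathcal{C}_i(u)$, and $m-1$ identical terms $\mathcal{C}_i(z_i)$.

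The first key step is to dispose of the sentinel terms. I would invoke the remark at the start of this section that $\mathcal{C}_i(z_i) = 0$, which holds directly by Definition 2 because $z_i$ is explicitly removed from $\Phi(P_i)$. Equivalently, by the PCL (Lemma 1), $\mathcal{C}_i \subseteq D_i \setminus \{z_i\}$. Hence no sentinel component can witness membership, and the existential condition collapses to $\mathcal{C}_i(u) = 1$. This yields both directions at once:
\begin{itemize}
\item if $\mathcal{C}_i(u) = 1$, then $u$ is a witness, so $\MIS(U) = 1$;
\item if $\MIS(U) = 1$, the witness cannot be a copy of $z_i$, so it must be $u$, giving $\mathcal{C}_i(u) = 1$.
\end{itemize}
Since both sides are $\{0,1\}$-valued, this establishes the equality.

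The step I expect to be the main obstacle is well-definedness: showing that $(u, z_i, \dots, z_i)$ is a legitimate input to $\MIS$ at stage $i$. The definition of a valid aggregate object requires every component to lie in $D_i = \{0,1\}^{n_i}$, while $z_i = \#^{n_i}$ is not a binary string. I would handle this by reading the phrase ``implicitly or explicitly encoded'' in the definition of $\MIS$ as allowing sentinel-padded tuples whose length profile matches a stage-$i$ object (aggregate size $m n_i$, so the stage is recoverable), with the sentinel slots treated as inert padding rather than valid components. Under that reading, the sentinel slots contribute no candidate witnesses at all. Under the stricter reading, where they are treated as components, they contribute only the value $\mathcal{C}_i(z_i) = 0$. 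Either way the case analysis above goes through unchanged. I would state this convention explicitly before the computation, so the quantifier in the definition of $\MIS$ ranges over a clearly specified set.
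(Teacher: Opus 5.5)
Your argument is correct and matches the paper's proof: unfold the existential definition of $\MIS$ on $(u, z_i, \dots, z_i)$, then use $\mathcal{C}_i(z_i) = 0$ (from Definition 2) to discard the sentinel slots, which leaves exactly $\mathcal{C}_i(u)$. Your well-definedness remark goes beyond the paper and is a fair catch: since $z_i = \#^{n_i} \notin \{0,1\}^{n_i} = D_i$, the padded tuple is not a valid aggregate object under the paper's own definition, and the removal of $z_i$ in Definition 2 is vacuous. The paper's proof passes over this silently, so stating your padding convention explicitly is what makes the left-hand side meaningful.
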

	\begin{proof}
		By the definition of aggregate acceptance, the aggregate input $U_i = (u, z_i, \dots, z_i)$ evaluates to $1$ if and only if at least one of its components evaluates to $1$ under $\mathcal{C}_i$. Because all trailing components are set to the inert sentinel $z_i$ (for which $\mathcal{C}_i(z_i) = 0$), the existential condition reduces entirely to whether the distinguished component $u$ belongs to $\mathcal{C}_i$. Hence, $\MIS(u, z_i, \dots, z_i) = \mathcal{C}_i(u)$.
	\end{proof}
	
	\subsection{Bridge Theorem}
	\begin{theorem}[Bridge Theorem]
		If $\MIS \in \classP$, then there exists a uniform effective predictor family $G$ such that for every stage $i$, $L(G(i)) = \mathcal{C}_i$.
	\end{theorem}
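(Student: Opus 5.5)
The plan is to build the predictor family directly from a hypothetical polynomial-time decider for $\MIS$, using the Slice Theorem as the only link between aggregate inputs and stage languages. Assume $\MIS \in \classP$ and fix a deterministic machine $M$ and a constant $c$ such that $M$ decides $\MIS$ within $|U|^c + c$ steps on every valid aggregate object $U$.

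\emph{Step 1: the uniform construction.} I will define a total computable map $i \mapsto G(i)$, where $G(i)$ is a machine description. On input $u$, the machine $G(i)$ does the following:
\begin{enumerate}
\item it checks that $u \in D_i = \{0,1\}^{n_i}$, and rejects otherwise;
\item it forms the slice input $U = (u, z_i, \dots, z_i)$ with some fixed arity $m$ (say $m = 2$, or any $m$ fixed uniformly in $i$);
\item it runs $M$ on $U$ and outputs whatever $M$ outputs.
\end{enumerate}
Since $n_i$ and $z_i = \#^{n_i}$ are computable from $i$, the description of $G(i)$ is computable from $i$. By the $s$-$m$-$n$ theorem in Kleene's enumeration, $G$ is therefore an effective family, and each $G(i)$ is some $P_{g(i)}$ for a recursive $g$.

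\emph{Step 2: correctness on the domain.} For $u \in D_i$, the Slice Theorem gives $M(u, z_i, \dots, z_i) = \MIS(u, z_i, \dots, z_i) = \mathcal{C}_i(u)$. Hence $G(i)$ outputs $1$ on $u \in D_i$ exactly when $u \in \mathcal{C}_i$. For the sentinel, $\mathcal{C}_i(z_i) = 0$ by Definition 2. However, the Slice Theorem as stated covers $u \in D_i$ while $z_i$ was introduced separately, so I will add an explicit check that rejects $u = z_i$, to keep both sides aligned.

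\emph{Step 3: the clock, which I expect to be the main obstacle.} By definition, $L(G(i))$ counts only acceptances within $T(|u|)$ steps, where $T$ is tied to the index of the machine. So I must show that the running time of $G(i)$ on $D_i$ is at most the time bound attached to its own index. That running time is bounded by $O(n_i)$ for the input check and the sentinel construction, plus $(m n_i)^c + c$ for simulating $M$. My first attempt will be to use the padding lemma to choose the index $g(i)$ large enough that $n_i^{g(i)} \geq (m n_i)^c + O(n_i)$ for all $i$; this is possible because $c$ and $m$ are fixed. Then $L(G(i)) = \mathcal{C}_i$ follows from Step 2.

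I note explicitly that this yields $L(P_{g(i)}) = \mathcal{C}_i$ with $g(i)$ generally different from $i$. Consequently the statement is provable in this form, but it cannot be combined with the PITT, which concerns only $L(P_k)$ versus $\mathcal{C}_k$ at the same index $k$. Any later step claiming a contradiction from this theorem would have to establish $g(i) = i$, and I do not see how that could be done.
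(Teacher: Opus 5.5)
Your Steps 1 and 2 are exactly the paper's proof: format $u$ as the slice $(u,z_i,\dots,z_i)$, run $M$, and apply the Slice Theorem. The paper then simply asserts that $G(i)$ runs ``within the required time bounds.''

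The gap is in your Step 3. In this paper the index of a machine fixes both its clock and its domain: $L(P_j)=\{u\in D_j : P_j(u)=1 \text{ within } |u|^j \text{ steps}\}$. Padding $G(i)$ to an index $g(i)\neq i$ therefore gives $L(P_{g(i)})\subseteq D_{g(i)}=\{0,1\}^{n_{g(i)}}$. This set is disjoint from $D_i$, because $n_{g(i)}\neq n_i$. Since your $G(i)$ rejects everything outside $D_i$, you actually get $L(P_{g(i)})=\emptyset$. That differs from $\mathcal{C}_i$ whenever $P_i$ fails to accept some $v\in D_i$ within $n_i^i$ steps. Taking the clock from $g(i)$ but the domain from stage $i$ is a hybrid reading that the definitions do not support. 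The coherent reading of $L(G(i))$, which is the one the paper's proof uses, treats $G(i)$ as a stage-$i$ machine with domain $D_i$ and clock $n_i^i$. Under that reading, padding changes nothing.

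Under that reading, Step 3 can be repaired without padding. With fixed arity $m$, $(mn_i)^c+O(n_i)\le n_i^i$ holds for all but finitely many $i$. The remaining stages can be handled by hard-wiring $\mathcal{C}_i$, which is computable from $i$ by running $P_i$ for $n_i^i$ steps on each of the finitely many $v\in D_i$. In fact, such a table machine decides $\mathcal{C}_i$ in one left-to-right pass for \emph{every} $i$, with no hypothesis on $\MIS$ at all, provided ``effective'' only means that $i\mapsto G(i)$ is computable, as in your Step 1. This sharpens your closing remark, which is correct: the conclusion is satisfiable unconditionally, so it cannot contradict PITT, which only compares $L(P_k)$ with $\mathcal{C}_k$ at the same index.

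Two further issues are inherited from the paper. First, $z_i=\#^{n_i}\notin\{0,1\}^{n_i}=D_i$, so the slice tuple is not a valid aggregate object in the sense of Section 2.4. Your hypothesis that $M$ is correct on valid aggregate objects therefore does not cover it; you need $M$ to be correct on every input at which the Slice Theorem evaluates $\MIS$. Second, suppose the Section 6 growth condition $m=\Omega(n_i^{i-k})$ is part of validity. Then fixed $m$ is not allowed, and the only available bound on $M$, namely $(mn_i)^c$, is of order $n_i^{c(i-k+1)}$. That no longer fits under the stage-$i$ clock $n_i^i$ once $c\ge 2$, so the route through $M$ cannot certify the time bound the paper claims.
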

	\begin{proof}
		Assume that $\MIS \in \classP$. Let $M$ be a deterministic polynomial-time decider guaranteed to exist for $\MIS$. For each stage $i$, we construct a stage machine $G(i)$ taking an input $u \in D_i$ by formatting it into the aggregate object $U_{i,u} = (u, z_i, \dots, z_i)$ and running $M(U_{i,u})$. 
		
		By the Slice Theorem, $M(U_{i,u}) = 1$ if and only if $\mathcal{C}_i(u) = 1$. Thus, $L(G(i)) = \mathcal{C}_i$. Furthermore, because $M$ runs in polynomial time relative to the aggregate input length, $G(i)$ computes its output effectively within the required time bounds, yielding a uniform effective predictor family for the hierarchy.
	\end{proof}	
	
	\subsection{Complexity Consequence}
	\begin{theorem}[$\mathsf{P}$ Nonmembership]
		\label{thm4}
		$\MIS \notin \classP$.
	\end{theorem}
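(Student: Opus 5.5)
The plan is to argue by contradiction, combining the Bridge Theorem with the Predictor Impossibility Theorem. Assume $\MIS \in \classP$ and fix a deterministic decider $M$ for $\MIS$ running in time at most $c\,q^{d}$ on aggregate inputs of length $q$. The Bridge Theorem then yields the uniform family $G$ with $L(G(i)) = \mathcal{C}_i$ for every stage $i$. The aim is to find a single index $k$ for which the machine produced by $G$ at stage $k$ \emph{is} (semantically) the stage machine $P_k$ itself, under the clock $T(n_k)$. At that index we would have $L(P_k) = \mathcal{C}_k$, which Theorem 1 forbids.

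The key step is aligning indices. PITT only compares $L(P_k)$ with $\mathcal{C}_k$ at the \emph{same} index $k$. The Bridge Theorem, however, only tells us that $G(i)$ occurs somewhere in the enumeration, say as $P_{g(i)}$, and in general $g(i) \neq i$. I would close this gap with Kleene's recursion theorem applied to the standard enumeration of Section 2. Since $i \mapsto G(i)$ is a total computable map on indices (it just hard-wires $i$, $n_i$ and $z_i$ into a wrapper around $M$), there is an index $k$ with $P_k \simeq G(k)$. Moreover, because every program has infinitely many equivalent indices (padding lemma), such fixed points can be taken arbitrarily large.

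The second step, which I expect to be the main obstacle, is making the clocks agree. $L(P_k)$ is defined through acceptance within $T(n_k) = n_k^{k}$ steps, whereas $G(k)$ runs $M$ on the slice $(u, z_k, \dots, z_k)$. I would use the shortest slice allowed by the Slice Theorem (taking $m$ fixed, even $m=1$), so that $q_k = m n_k$ and $G(k)$ halts within roughly $c\,(m n_k)^{d} + O(n_k)$ steps, including formatting overhead. This is below $n_k^{k}$ once $k > d+1$ and $n_k$ is large. Since the fixed points can be chosen arbitrarily large, we may pick such a $k$. Then $P_k$ halts with the same output as $M$ within the clock on every $u \in D_k$. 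Combining this with the Slice Theorem gives $L(P_k) = \mathcal{C}_k$ on $D_k \setminus \{z_k\}$. The sentinel $z_k$ needs a separate check: we must ensure $P_k(z_k) \neq 1$, which holds because $\MIS(z_k,\dots,z_k) = 0$.

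One further subtlety should be flagged here. $\mathcal{C}_k$ is defined from $P_k$, and $P_k$ is now defined from $M$, so the fixed point is self-referential. This is exactly what the diagonalization requires, but one must check that it produces no inconsistency in the \emph{definition} of $\MIS$. The definition of $\MIS$ depends only on the fixed enumeration, not on $M$, so it is consistent. Once the index alignment and the clock comparison are secured, Theorem 1 at stage $k$ gives the contradiction, and therefore $\MIS \notin \classP$. The argument is essentially a time-hierarchy diagonalization, since the stage clocks $n_i^{i}$ eventually dominate every fixed polynomial. Making that domination explicit is where the proof must be careful: as written, the Bridge Theorem alone, without the recursion-theorem alignment, does not suffice.
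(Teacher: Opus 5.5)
Your diagnosis of the paper's own proof is correct. The paper simply chains the Bridge Theorem with PITT, and that does not go through: PITT only asserts $L(P_k)\neq\mathcal{C}_k$ for the \emph{same} machine $P_k$, while the finite set $\mathcal{C}_k$ is decided by many other machines (e.g.\ simulate $P_k$ for $n_k^{k}$ steps and flip).

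\textbf{The main gap.} Your repair fails at the step ``use the shortest slice, with $m$ fixed''. The paper stipulates that valid aggregate objects obey the aggregate growth condition $m\ge c_0\,n_i^{i-k_0}$ for fixed constants $c_0,k_0$. This is its $m=\Omega(n_i^{i-k})$, renamed to avoid a clash with your index, and it is exactly what Section~6 uses to get $\MIS\in\classNP$. Also, for $m\ge 2$ your sentinel slice is not valid even under Section~2.4, since $z_i=\#^{n_i}\notin D_i=\{0,1\}^{n_i}$.

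Under the growth condition, the shortest admissible slice at stage $k$ has length $q_k\ge c_0\,n_k^{k-k_0+1}$. So your bound $c\,q_k^{d}$ is of order at least $n_k^{d(k-k_0+1)}$. This exceeds the clock $n_k^{k}$ once $d\ge 2$ and $k$ is large, and $d$ is dictated by $M$, not by you. You can then no longer conclude that $P_k$ halts within its clock, and no contradiction with PITT arises.

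No argument can close this, because under the growth condition the statement is false. Simulating $P_i$ for $n_i^{i}$ steps on each of the $m$ components costs, up to simulation overhead,
\[
m\,n_i^{i}\le m^{2}n_i^{k_0}/c_0\le q_i^{\max(2,k_0)}/c_0,
\]
which is polynomial in $q_i=mn_i$. Equivalently, the Section~6 verifier has only $m\le q_i$ candidate witnesses to try. So your argument can at best show nonmembership for the unrestricted language, which is not the language Theorem~5 places in $\classNP$.

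\textbf{A second gap, even for the unrestricted language.} From $P_k\simeq G(k)$ you cannot infer that $P_k$ halts with $M$'s output within $n_k^{k}$ steps. The recursion theorem gives only extensional equality, whereas $L(P_k)$ and $\mathcal{C}_k$ are defined by the step count of $P_k$ itself. You would need a time-efficient fixed point for the concrete enumeration, or better, no fixed point at all. Because the $n_i$ are distinct, $i$ is computable from $|u|$, so one machine $G^{*}$ independent of $i$ serves every stage. By padding, $G^{*}$ occurs as $P_k$ for infinitely many $k$ with unchanged running time.

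Relatedly, padding does not make fixed points arbitrarily large. A padded copy $k'$ of a fixed point $k$ satisfies $P_{k'}\simeq G(k)$, not $P_{k'}\simeq G(k')$. Getting arbitrarily large fixed points needs the infinitely-many-fixed-points form of the recursion theorem.
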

	\begin{proof}
		Assume for contradiction that $\MIS \in \classP$. By the Bridge Theorem, this polynomial-time decidability implies the existence of a uniform effective predictor family $G$ where $L(G(i)) = \mathcal{C}_i$ for all stages $i$. However, this directly contradicts PITT, which proved that no such predictor family can exist. The assumption is false, so $\MIS \notin \classP$.
	\end{proof}
	
	%=============================================================================
	% SECTION 6: MEMBERSHIP OF MIS IN NP
	%=============================================================================
	\section{Membership of \textit{MIS} in $\classNP$}
	\begin{assumption}[Runtime]
		For the complexity analysis, we assume that all stage machines $P_i$ run in time bounded by $\Theta(n_i^i)$ on inputs in $D_i$.
	\end{assumption}
	
	\subsection{Aggregate Growth and Witness Verification}
	Let $U_i = (u_1, \dots, u_m)$ be a valid aggregate object belonging to stage $i$. Each component satisfies $|u_j| = n_i$. The aggregate input size is $q_i = |U_i| = mn_i$. Valid aggregate objects satisfy the aggregate growth condition $m = \Omega(n_i^{i-k})$ for a fixed constant $k$. Hence, $q_i = \Omega(n_i^{i-k+1})$.
	
	By definition, $\MIS(U_i)=1$ if and only if there exists a component $u_j$ such that $\mathcal{C}_i(u_j)=1$. Any successful component can serve as a witness. Given $(U_i, u_j)$, a verifier checks that $u_j$ is one of the components of $U_i$ and that $\mathcal{C}_i(u_j)=1$.
	
	The runtime on $U_i$ as the input, where $q_i$ is its length, is bounded by $O(q_i \log q_i)$. This effectively bounds the verification to a polynomial relative to the size of the aggregate input $U_i$.
	
	\begin{theorem}[$\mathsf{NP}$ Membership]
		\label{thm5}
		$\MIS \in \classNP$.
	\end{theorem}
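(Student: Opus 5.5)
The plan is to exhibit an explicit polynomial-time verifier $V$ that takes a pair $(U, j)$, where $j$ is a component index of binary length $O(\log m)$, and to check both soundness/completeness and the running time. The index $j$ serves as the certificate, as in the witness discussion preceding the statement. On input $(U,j)$, $V$ carries out four steps.
\begin{enumerate}
\item It parses $U$ as a tuple $(u_1,\dots,u_m)$ and checks that all components have a common length $n$.
\item It determines whether $n=n_i$ for some $i$ by iterating $n_1=1$, $n_{t+1}=2^{n_t}$ until the value reaches or exceeds $n$. This takes only $O(\log^* n)$ iterations, each of cost polynomial in $\log n$, and it recovers the stage $i$.
\item It checks the aggregate growth condition $m \ge c\, n_i^{\,i-k}$ for the fixed constants $c,k$, and rejects if $U$ is not a valid aggregate object.
\item It computes the description of $P_i$ from $i$ in the standard enumeration, and simulates the clocked $P_i$ on $u_j$ for $T(n_i)=n_i^{\,i}$ steps. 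It accepts iff $u_j\neq z_i$ and the simulation does not halt with output $1$ within the clock.
\end{enumerate}
By Definition~2 and the PCL (Lemma~1), step 4 accepts exactly when $\mathcal{C}_i(u_j)=1$.

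Correctness then follows directly from the definition of $\MIS$. If $\MIS(U)=1$, some index $j$ has $\mathcal{C}_i(u_j)=1$, and $V$ accepts $(U,j)$. Conversely, if $V$ accepts some $(U,j)$, then $U$ is a valid aggregate object of stage $i$ and its component $u_j$ lies in $\mathcal{C}_i$, so $\MIS(U)=1$. The certificate has length $O(\log m)\le O(q_i)$, so it is polynomially bounded.

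The main obstacle is the running-time bound for step 4, and I would not rely on the $O(q_i\log q_i)$ estimate asserted in the text. Instead, I would argue directly from the growth condition. Since $q_i=mn_i\ge c\,n_i^{\,i-k+1}$, we get
\[
n_i^{\,i} \le c^{-1} q_i\, n_i^{\,k-1} \le c^{-1} q_i^{\,k},
\]
using $n_i\le q_i$. A universal clocked simulation of $T(n_i)$ steps costs $O(T(n_i)\log T(n_i))$ plus the cost of producing $P_i$'s description, which is negligible because $i=O(\log^* q_i)$. Steps 1--3 are linear or near-linear in $q_i$.

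Hence $V$ runs in time $O(q_i^{\,k}\log q_i)$, a polynomial whose degree depends only on the fixed constant $k$, and this establishes $\MIS\in\classNP$. I would flag explicitly that this bound depends on restricting $\MIS$ to valid aggregate objects, where inputs violating the growth condition are rejected in step 3. Without this padding, verifying membership in $\mathcal{C}_i$ would require time $n_i^{\,i}$, which is not polynomial in $q_i$.
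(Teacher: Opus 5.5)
Your proposal is correct and follows the paper's argument. The certificate is a successful component (you pass its index $j$ rather than $u_j$ itself, which makes no difference), it is checked by a clocked simulation of $P_i$ for $n_i^{\,i}$ steps, and the aggregate growth condition is what makes this cost polynomial in $q_i$. Your explicit bound $n_i^{\,i}\le c^{-1}q_i^{\,k}$, giving $O(q_i^{\,k}\log q_i)$, is more careful than the paper's asserted $O(q_i\log q_i)$: that estimate follows from $q_i=\Omega(n_i^{\,i-k+1})$ only when $k\le 1$, since for larger $k$ the extra factor $n_i^{\,k-1}$ is not $O(\log q_i)$.
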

	\begin{proof}
		The witness is any successful component $u_j \in \mathcal{C}_i$. By the preceding analysis, the witness can be verified in time $O(q_i \log q_i)$. Thus, $\MIS \in \classNP$.
	\end{proof}
	
	%=============================================================================
	% SECTION 7: MAIN THEOREM
	%=============================================================================
	\section{Main Theorem}
	\begin{theorem}[Main Theorem]
		\label{thm6}
		$\MIS \in \classNP \setminus \classP$.
	\end{theorem}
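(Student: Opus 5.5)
The plan is to obtain the Main Theorem as the conjunction of the two results just established. Theorem~\ref{thm5} gives $\MIS \in \classNP$: the witness is a component $u_j$ of $U_i$ with $\mathcal{C}_i(u_j)=1$, and the verifier runs in time $O(q_i \log q_i)$. Theorem~\ref{thm4} gives $\MIS \notin \classP$. Formally, $\MIS \in \classNP \setminus \classP$ then follows in one line. The real work is to check that the two cited theorems actually fit together, and that the argument behind Theorem~\ref{thm4} is sound.

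For the $\classNP$ half, I would first confirm that the verifier's check of ``$\mathcal{C}_i(u_j)=1$'' is within budget. By the PCL (Lemma 1), this check means simulating $P_i$ on $u_j$ for $T(n_i)=n_i^i$ steps and accepting iff it does not output $1$. This cost, $O(n_i^i)$ up to simulation overhead, must be polynomial in $q_i$. The aggregate growth condition $m=\Omega(n_i^{i-k})$ gives $q_i=\Omega(n_i^{i-k+1})$, so $n_i^i \le q_i^{c}$ for a fixed constant $c$ depending only on $k$. The verifier must also recover the stage $i$ and the machine $P_i$ from the input, which I would handle by encoding the index in the tuple format.

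For the $\classP$ half, the step I expect to be the main obstacle is the passage from the Bridge Theorem to PITT. As stated, PITT says only that $L(P_k)\neq\mathcal{C}_k$ when the \emph{same} index $k$ appears on both sides. The Bridge Theorem, however, produces a machine $G(i)$ that is in general some other index $P_j$ with $j\neq i$, and then $L(P_j)=\mathcal{C}_i$ contradicts nothing. To close this gap I would first try Kleene's recursion theorem. Since $G$ is computable, there is a fixed point $e$ with $P_e \simeq G(e)$, giving $L(P_e)=L(G(e))=\mathcal{C}_e$, which contradicts PITT at $k=e$.

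For this to work, $G(e)$ must finish within the clock $T(n_e)=n_e^e$ on $D_e$, and the padded tuple $(u,z_e,\dots,z_e)$ must be a \emph{valid} aggregate object. These two requirements conflict. Validity forces $m=\Omega(n_e^{e-k})$, so $M$ runs in time about $(n_e^{e-k+1})^{c}$, where $c$ is the degree of $M$'s polynomial bound. This exceeds $n_e^e$ whenever $c\ge 2$ and $e$ is large. I would try to resolve the tension by choosing $m$ minimal and exploiting the freedom in the fixed point to take $e$ large relative to $c$ and $k$. I would also check whether the sentinel $z_e=\#^{n_e}$, which lies outside $\{0,1\}^{n_e}$, is even a legal component of a valid aggregate object.

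If this step cannot be completed, the Main Theorem does not follow from the results as stated. The honest conclusion would then be that Theorem~\ref{thm4} needs a strengthened PITT quantifying over all computable predictor families, not just the self-indexed identity $L(P_k)\neq\mathcal{C}_k$.
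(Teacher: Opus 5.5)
Your first paragraph is exactly the paper's proof of the Main Theorem, which is only the conjunction of Theorem~\ref{thm5} and Theorem~\ref{thm4}. Your audit of Theorem~\ref{thm4} is also correct, and you rightly flag the weak link.
\begin{itemize}
\item PITT as proved says only that $P_k$ disagrees with its own flipped language $\mathcal{C}_k$, which is immediate from the PCL.
\item The Bridge Theorem supplies a different machine $G(i)$, so nothing is contradicted.
\item Moreover $z_i=\#^{n_i}\notin\{0,1\}^{n_i}=D_i$, so $(u,z_i,\dots,z_i)$ is not a valid aggregate object, and any legitimate padding must obey $m=\Omega(n_i^{i-k})$.
\end{itemize}
The genuine gap is that your proposal stops at this point, and the step cannot be completed. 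Without a clock, the ``strengthened PITT'' you suggest is false. The uniform family that runs $P_i$ on $u$ for $T(n_i)$ steps and outputs $1$ iff it did not output $1$ decides every $\mathcal{C}_i$. With a clock, you hit exactly the timing conflict you describe. Taking the fixed point $e$ large does not help, since the bound $q_e^{c}\ge n_e^{c(e-k+1)}$ exceeds $n_e^{e}$ once $c\ge 2$ and $e$ is large.

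More decisively, your own $\classNP$-half calculation refutes the $\classP$-half. The growth condition gives $n_i^{i-k+1}\le C q_i$, and together with $n_i\le q_i$ this yields $n_i^i\le C q_i^{\max(k,1)}$. So the check ``$\mathcal{C}_i(u_j)=1$'' (simulate $P_i$ for $n_i^i$ steps and negate) costs time polynomial in $q_i$. Running it deterministically on each of the $m\le q_i$ components decides $\MIS$ in polynomial time. Hence, on the very reading under which $\MIS\in\classNP$ holds, $\MIS\in\classP$. The Main Theorem is therefore false, not merely unproved. The clock conflict in your recursion-theorem attempt is a symptom of this, not a technicality to be engineered away.

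The structural reason is that the witness is only an index $j\le m\le q_i$, which saves at most a polynomial factor over exhaustive search. Conversely, suppose you weaken the growth condition (e.g.\ constant $m$) so that a fixed-point argument fits under the clock $n_e^{e}$. Then the check $\mathcal{C}_i(u_j)=1$ costs about $n_i^{i}$ steps, which is superpolynomial in the input length, and the paper's $\classNP$ verifier is no longer polynomial. So the right conclusion is not that Theorem~\ref{thm4} needs a stronger PITT. It is that no version of this $\MIS$ supports both halves of the Main Theorem with these arguments.
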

	\begin{proof}
		By Theorem \ref{thm5}, $\MIS \in \classNP$, and by Theorem \ref{thm4}, $\MIS \notin \classP$. Thus, $\MIS \in \classNP \setminus \classP$.
	\end{proof}
	
	%=============================================================================
	% BIBLIOGRAPHY
	%=============================================================================

	\appendix
	\section{Crossing the Barriers}
	\label{sec:appendix_barriers}
	
	A proposed resolution to the $\mathsf{P} \neq \mathsf{NP}$ question must address the well-known diagnostic barriers that have shaped complexity theory over the last half-century. This appendix explicitly positions the current proof architecture against the Relativization \cite{BGS75}, Natural Proofs \cite{RR97}, and Algebrization \cite{AW09} barriers, demonstrating how the techniques employed herein successfully bypass these historical traps.
	
	\subsection{The Historical Context of Structural Barriers}
	In the decades following Cook's 1971 formalization of the $\mathsf{P}$ vs. $\mathsf{NP}$ question, the computational complexity community has established a sequence of no-go theorems designed to rule out specific mathematical approaches. The prevailing consensus required that any valid proof must ``crush through'' these barriers by relying on deep, non-simulating structural properties of computation. 
	
	However, history has shown that every major attempt to structurally defeat these barriers has stalled. Circuit complexity programs of the 1980s were halted by the Natural Proofs barrier. Arithmetization techniques of the 1990s, which successfully bypassed relativization to prove $\mathsf{IP} = \mathsf{PSPACE}$, were subsequently halted by the Algebrization barrier. For fifty years, the demand for purely structural engines has yielded profound insights into geometry and algebra, but no resolution to the core problem. 
	
	Given this historical exhaustion of structural approaches, the present manuscript returns to the mathematically flawless engine of classical diagonalization. Rather than attempting to physically shatter the barriers via structural mapping, this proof introduces a \textit{hybrid syntactic-semantic bypass}, formally constraining the mathematical universe such that the barriers are not permitted to execute.
	
	\subsection{The Hybrid Bypass: Non-Relativization and Non-Algebrization}
	The Baker-Gill-Solovay (BGS) theorem \cite{BGS75} demonstrates that any proof relying purely on time-bounded simulation relativizes, meaning it will inadvertently derive a false contradiction (e.g., $\mathsf{P}^{\mathsf{QBF}} \neq \mathsf{NP}^{\mathsf{QBF}}$) in an oracle-equipped universe. Aaronson and Wigderson \cite{AW09} extended this logic, showing that algebraic extensions of these simulations fail in the presence of algebraic oracles.
	
	The present proof avoids these false derivations not by altering the fundamental simulation of standard Turing machines, but by implementing a hybrid bypass within the Semantic Pseudo-Complement Operator, $\Phi$. 
	
	By the Kleene Recursion Theorem~\cite{Kleene}, any machine can obtain its own description and determine its syntactical composition. We formally define the $\Phi$ operator such that it inspects the transition function of the target machine for the presence of a \texttt{Q-state} (the syntactic hallmark of a query machine or algebraic oracle machine). 
	\begin{enumerate}
		\item If no \texttt{Q-state} is detected, $\Phi$ executes its standard semantic inversion (the ``flip''), driving the diagonalization against the stage language $\mathcal{C}_i$ and generating the Predictor-Impossibility contradiction.
		\item If a \texttt{Q-state} is detected, $\Phi$ explicitly bypasses the semantic inversion, passing the machine's output unaltered. 
	\end{enumerate}
	
	Consequently, when this proof architecture is subjected to the BGS or Algebrization test (meaning it is evaluated in a universe where machines possess query states to access an oracle) the $\Phi$ operator detects the syntactic anomaly and safely aborts the contradiction-generating step. Because the proof intentionally yields no separation in oracle-equipped universes, it does not prove a mathematical falsehood in those universes. Therefore, by strict formal definition, the proof neither relativizes nor algebrizes. 
	
	\subsection{Inherently Non-Naturalizing}
	The Natural Proofs barrier, established by Razborov and Rudich \cite{RR97}, proves that no technique can separate complexity classes if it relies on a ``natural'' property of boolean functions—defined as a property that is both \textit{constructive} (computable in structural polynomial time) and \textit{large} (holding for a significant fraction of all boolean functions). This barrier effectively neutralized decades of circuit complexity research.
	
	The methodology utilized in this manuscript is inherently non-naturalizing because it fundamentally lacks the ``largeness'' property. Classical diagonalization operates by defining a highly specific, highly artificial language (such as $\mathcal{C}_i$) designed to systematically disagree with an enumeration of polynomial-time deciders. The property of ``computing the diagonalized language $\MIS$'' is astronomically rare among all possible boolean functions. Because the proof mechanism targets a discrete sequence of machines to construct a single, uniquely tailored set, it operates entirely outside the domain of the Natural Proofs barrier without requiring any syntactical bypasses.
	
	\vspace{1em}
	\hrule

\end{document}